\definecolor{quantumviolet}{HTML}{53257F} 
\newcommand{\changefont}{\fontsize{8.5}{11}\selectfont}
\newtheorem{prop}{Proposition}[section]
\def\diff{\mathrm{d}}
\title{An application of continuous-variable gate synthesis to quantum simulation of classical dynamics}
\author{Sam Cochran, James Stokes, Paramsothy Jayakumar and Shravan Veerapaneni}
\date{}
\begin{document}

\begin{abstract}
    Although quantum computing holds promise to accelerate a wide range of computational tasks, the quantum simulation of quantum dynamics as originally envisaged by Feynman remains the most promising candidate for achieving quantum advantage. A  less explored possibility with comparably far-reaching technological applicability is the quantum simulation of classical nonlinear dynamics. Attempts to develop digital quantum algorithms  based on the Koopman von Neumann formalism have met with challenges because of the necessary projection step from an infinite-dimensional Hilbert space to the finite-dimensional subspace described by a collection of qubits. This finitization produces numerical artifacts that limit solutions to very short time horizons. In this paper, we review continuous-variable quantum computing (CVQC), which naturally avoids such obstacles, and a CVQC algorithm for KvN simulation of classical nonlinear dynamics is advocated. In particular, we present explicit gate synthesis for product-formula Hamiltonian simulation of anharmonic vibrational dynamics.
\end{abstract}

\maketitle

\tableofcontents

\thispagestyle{alim}

\newpage

\section{Introduction}

The field of many-body system simulation presents a diverse landscape, in which the simulator $S$ and the target of simulation $T$ can consist of either classical (C) or quantum (Q) degrees of freedom, giving rise to a four-fold $S$-$T$ classification encompassing a wide array of simulation possibilities and associated complexities. Prominent examples include classical simulation of large-scale granular flow (C-C) \cite{de2019scalable} and classical simulation of quantum many-body dynamics (C-Q) using exact diagonalization techniques or compressed representations such as tensor network states \cite{orus2019tensor}. 

Recently, a shift of focus has occurred towards quantum simulation platforms, particularly those based on digital quantum information processing units (qubits). This shift is largely motivated by the rapid advancements and the near-term potential of quantum computers. 
The primary focus within this domain has been Q-Q, particularly simulation of discrete quantum dynamics, such as spin systems, using digital quantum simulation (DQS) techniques \cite{Daley2022}. In order to model fundamentally continuous problems such as electronic structure, a finite basis must necessarily be introduced which maps the problem to an effective spin model. 
Given the well-known discretization errors incurred by necessarily finite bases, an alternative strategy adapted to continuous-variable quantum systems is wanting. Continuous-variable quantum computing (CVQC) \cite{lloyd1999quantum} provides such an alternative by  leveraging continuous quantum information processing units (qumodes) and mapping each continuous degree of freedom in the system being simulated onto a corresponding qumode of the quantum device.

Despite the technological importance of classical simulation targets, most of the attention from quantum algorithm research  has been concentrated on Q-Q, and quantum simulation of classical systems (Q-C) has received comparatively little attention. The Koopman von Neumann formalism \cite{Koopman1931, JvN1, JvN2} provides a theoretical foundation for an approach to Q-C by effectively using quantum uncertainty to model the classical uncertainty of a stochastic classical dynamical system. The KvN formalism suggests that a quantum computer could be employed to simulate classical systems with an exponential and quadratic improvement in space and time complexity, respectively  \cite{kvn1}. This contrasts with quantum simulation of quantum systems (Q-Q), which achieve exponential improvement in both space and time complexity. DQS implementations of the KvN formalism necessarily involve domain discretization, which has been identified \cite{kvn2} as a limiting factor for long-time simulation. 
In particular, numerical simulations indicate that the under-resolved dynamics at sub-discretization scales induce Gibbs-like phenomena that prevent accurate simulation beyond short time horizons. Furthermore, Ref.~\cite{kvn2} noted that existing numerical methods for suppressing these artifacts tend to be nonlinear, making them incongruous with the inherently linear nature of quantum computing.

In this paper, we explore the application of CVQC to classical nonlinear dynamics, focusing on  Hamiltonian dynamical systems in which the Hamiltonian is a polynomial in the phase space variables up to degree four. In contrast to previous work, which focused on \emph{digital} quantum simulation of classical dynamical systems utilizing the Koopman von Neumann formalism \cite{kvn1, kvn2}, the CVQC techniques discussed in this paper have the potential to sidestep known limitations resulting from discretization artifacts, potentially achieving accurate simulation over longer time horizons than DQS.

In the existing CVQC literature, an explicit gate-based algorithm \cite{kvn3} has been provided to handle one-dimensional dynamical systems with polynomial vector fields of arbitrary degree. The present paper, in contrast, limits the interactions to degree four but considers Hamiltonian dynamical systems with arbitrary phase space dimensionality $\geq 2$. 
Thus, the dynamical systems considered here appear to be outside of the scope considered by Ref.~\cite{kvn3}. This paper thus  fills a gap in the literature by drawing on existing work on continuous-variable gate synthesis \cite{gate_decomposition2020, gate_decomposition2021} to describe product formula simulation of Hamiltonian dynamical systems suitable for modeling anharmonic vibrational dynamics.

This paper is structured as follows; in section \ref{sec:preliminaries} we review generalities of CVQC and the KvN formalism, which are necessary to describe the quantum algorithm, elaborated on in section \ref{sec:algo}. Section \ref{sec:example} discusses the example of the classical harmonic oscillator, and section \ref{sec:outlook} presents conclusions and future directions.

\section{Preliminaries}\label{sec:preliminaries}

\subsection{Continuous-variable quantum computing}\label{sec:CVQC}

The elementary units of quantum information processing in the CVQC framework are quantum modes, or qumodes. A qumode is a quantum-mechanical system whose Hilbert space can be identified with that of a quantum harmonic oscillator. The state of $n$ qumodes can thus be described in terms of the tensor product Hilbert space $L^2(\mathbb{R})^{\otimes n} \cong L^2(\mathbb{R}^n)$, and quantum gates are defined as unitary operations which are local with respect to this tensor product structure. In order to describe a gate set concretely, recall that the position and momentum quadrature operators for $j,k=1,\ldots,n$ satisfy the following defining canonical commutation relations
\begin{equation}
    [X_j, P_k] = i \delta_{jk},
\end{equation}
where $\delta_{jk}$ is the Kronecker delta. A set of gates is then said to be universal if it enables polynomial-depth approximation of unitary transformations $U_H(t) := \exp(-iHt)$, with $H$ a Hermitian polynomial in the quadrature operators of arbitrary but fixed degree \cite{lloyd1999quantum, cerezo2022challenges}.
The following is one example of a universal gate set \cite{gate_decomposition2021}:
\begin{subequations}
\begin{alignat}{2}
    &\text{Momentum displacement: } &\mathcal{D}_j(s) &= \exp\big( i s {X_j} \big) \label{eq:displacement}\\
    &\text{Quadratic phase: } &\mathcal{P}_j(s) &= \exp\big( is {X_j}^2 / 2 \big) \label{eq:quadratic_phase}\\
    &\text{Cubic phase: } &\mathcal{V}_j(s) &= \exp\big( is {X_j}^3 / 3 \big) \label{eq:cubic_phase}\\
    &\text{Rotation: } &\mathcal{R}_j(s) &= \exp \big( is \big( {X_j}^2 + {P_j}^2 \big) / 2 \big) \label{eq:rotation}\\
    &\text{Controlled-phase ($j\neq k$): } \qquad &\mathrm{CZ}_{jk}(s) &= \exp\big( i s X_j X_k \big). \label{eq:controlled-phase}
\end{alignat}
\end{subequations}

Recall that the Fourier transform rotates states between the position and momentum eigenbases. An important observation of the CVQC formulation is that the Fourier transform can be simply described in terms of the above gate set. Specifically, the Fourier transform on a given qumode $j$ can be implemented using a single rotation gate of angle $s = \pi / 2$. This gate is called the Fourier gate, and it can be used to efficiently transition between the position and momentum eigenbases:
\begin{subequations}
\begin{align}
    \mathcal{F}_j &:= \mathcal{R}_j(\pi / 2) \\
     X_j &=
    \mathcal{F}_j^\dagger P_j \mathcal{F}_j \\
    P_j &=
    -\mathcal{F}_j^\dagger X_j \mathcal{F}_j.
\end{align}
\end{subequations}
It follows that the controlled-X gate, defined for $j\neq k$ by
\begin{equation}
    \mathrm{CX}_{jk}(s) = \exp\big( {-i s X_j P_k} \big), \label{eq:controlled-x}
\end{equation}
can be obtained using a combination of the controlled-phase \eqref{eq:controlled-phase} and the Fourier gate.

\subsection{Hamiltonian dynamics and Koopman von Neumann theory}

In this section we consider a classical Hamiltonian dynamical system with even-dimensional phase space variable $x \in \mathbb{R}^{2n}$, Hamiltonian function $H=H(x,t)$, and symplectic matrix $J \in \mathbb{R}^{2n \times 2n}$ that satisfies $J^2 = -I_{2n}$,
\begin{equation}
    J =
    \begin{bmatrix}
    0 & I_n \\
    -I_n & 0
    \end{bmatrix}.
\end{equation}
If the initial state of the system is specified by the initial value of the phase space coordinate $x(0)=x_0 \in \mathbb{R}^{2n}$, then the time development of state variable $x=x(t)$ is determined  by the solution of Hamilton's equations
\begin{equation}
    \left\{
    \begin{aligned}
    \frac{\diff x}{\diff t}(t) & = J \frac{\partial H}{\partial x}(x(t),t), & t \in (0,\infty) \\
    x(0) & = x_0. & 
    \end{aligned}
    \right.
\end{equation}
If the initial state of the system is instead described by a probability density function $\rho_0 = \rho_0(x)$ on phase space, then the phase space density at subsequent times is given by the solution of Liouville's equation 
\begin{empheq}[left = \empheqlbrace]{align}\label{eq:ivp}
\begin{split}
    \frac{\partial \rho}{\partial t} (x,t) &= L [\rho](x,t), \qquad (x, t) \in \mathbb{R}^{2n} \times (0, \infty) \\ 
    \rho(x, 0) &= \rho_0(x), \qquad \qquad x \in \mathbb{R}^{2n}
\end{split}
\end{empheq}
where the partial differential operator $L$ is defined as
\begin{equation} \label{eq:liouvillian}
    L := \sum_{j=1}^n \bigg( \frac{\partial H}{\partial x_j} \frac{\partial}{\partial x_{n+j}} - \frac{\partial H}{\partial x_{n+j}} \frac{\partial}{\partial x_j} \bigg).
\end{equation}

At this point we remark that the following Koopman von Neumann (KvN) operator is formally self-adjoint:
\begin{equation}\label{eq:l_to_hkvn}
    H_\text{KvN} := i L.
\end{equation}
The operator $H_{\rm KvN}$ therefore defines a Hamiltonian for a quantum system, which acts on a suitable subspace of wave functions $\psi\in L^2(\mathbb{R}^{2n})$ defined over the classical phase space $x\in\mathbb{R}^{2n}$. The time evolution generated by $H_{\rm KvN}$ maps an initial wave function $\psi_0\in L^2(\mathbb{R}^{2n})$ to the solution of the time-dependent Schr\"{o}dinger equation
\begin{empheq}[left = \empheqlbrace]{align}\label{eq:ivp_schrodinger}
\begin{split}
    i \frac{\partial \psi}{\partial t} (x,t) &= H_\text{KvN} [\psi](x,t), \qquad (x, t) \in \mathbb{R}^{2n} \times (0, \infty) \\ 
    \psi(x, 0) &= \psi_0(x), \qquad \qquad \qquad x \in \mathbb{R}^{2n}.
\end{split}
\end{empheq}
Given a wave function $\psi \in L^2(\mathbb{R}^{2n})$, define the associated Born probability density as
\begin{equation}\label{eq:rho_def}
    \rho_\psi(x) := |\psi(x)|^2 \quad \quad  \textrm{for all} \quad \quad x \in \mathbb{R}^{2n}.
\end{equation}

KvN theory hinges on the following proposition, which motivates the CVQC algorithm presented in the next section.
\begin{prop}\label{prop}
Given  a probability density $\rho_0$ on phase space, let $\psi_0 \in L^2(\mathbb{R}^{2n})$ be any solution of $\rho_{\psi_0} = \rho_0$, and denote by $\psi$ the corresponding solution of the initial value problem \eqref{eq:ivp_schrodinger}. Then the function $\rho : \mathbb{R}^{2n}\times [0,\infty) \longrightarrow [0,\infty)$ defined by $\rho : (x,t) \longmapsto \rho_{\psi(\cdot,t)}(x)$ satisfies the initial value problem \eqref{eq:ivp}.
\end{prop}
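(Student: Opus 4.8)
The plan is to reduce both initial value problems to first-order evolution equations driven by the \emph{same} operator $L$ and then exploit two structural features of $L$: that it is a first-order differential operator, hence a derivation obeying the Leibniz rule, and that it has real coefficients, hence commutes with complex conjugation. Everything else is bookkeeping.

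First I would rewrite the Schr\"odinger problem \eqref{eq:ivp_schrodinger} in terms of $L$ alone. Using $H_{\rm KvN} = iL$ from \eqref{eq:l_to_hkvn} and dividing the Schr\"odinger equation by $i$, the evolution of $\psi$ reads $\partial_t \psi = L\psi$, which is identical in form to the Liouville equation \eqref{eq:ivp} for $\rho$. The goal is therefore to show that $\rho_{\psi(\cdot,t)} = |\psi(\cdot,t)|^2$ solves $\partial_t \rho = L\rho$ with the stated initial data. Next I would record the two properties of $L$. Inspecting \eqref{eq:liouvillian}, $L = \sum_k b_k\,\partial_k$ is first order, and its coefficients $b_k$ are (real) first derivatives of the real-valued Hamiltonian $H$. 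Reality gives $L\bar\psi = \overline{L\psi}$, so conjugating $\partial_t\psi = L\psi$ yields $\partial_t\bar\psi = L\bar\psi$; and being first order, $L$ obeys the product rule $L(\psi\bar\psi) = (L\psi)\bar\psi + \psi\,(L\bar\psi)$.

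Combining these, I would compute
\begin{equation*}
\partial_t|\psi|^2 = (\partial_t\psi)\bar\psi + \psi\,(\partial_t\bar\psi) = (L\psi)\bar\psi + \psi\,(L\bar\psi) = L(\psi\bar\psi) = L|\psi|^2,
\end{equation*}
so $\rho := |\psi|^2$ satisfies the Liouville equation of \eqref{eq:ivp}. The initial condition is immediate from the hypothesis $\rho_{\psi_0} = \rho_0$ together with \eqref{eq:rho_def}: at $t=0$ one has $\rho(\cdot,0) = |\psi_0|^2 = \rho_0$.

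The calculation is short, and the only place where care is genuinely needed is the verification that $L$ has real coefficients and is a derivation, since these are exactly the facts that allow complex conjugation and the product rule to pass through the evolution; I expect no serious obstacle beyond assuming $H$ is real-valued and that $\psi$ is regular enough to differentiate $|\psi|^2$ and apply the product rule pointwise. I would note in passing that the divergence-free identity $\nabla\cdot b = \sum_{j=1}^n\big(\partial_{x_{n+j}}\partial_{x_j}H - \partial_{x_j}\partial_{x_{n+j}}H\big)=0$, which underlies the self-adjointness of $H_{\rm KvN}$ asserted at \eqref{eq:l_to_hkvn}, is not needed for the proposition as stated but does guarantee that $\int \rho(\cdot,t)\,\diff x$ is conserved, so that $\rho$ remains a bona fide probability density for all $t$.
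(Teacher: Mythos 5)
Your proof is correct and takes essentially the same route as the paper's: differentiate $|\psi|^2$ in time, substitute $\partial_t \psi = L\psi$ obtained from $H_{\rm KvN} = iL$, and use the Leibniz rule for the first-order operator $L$ to collapse $(L\psi)\bar\psi + \psi(L\bar\psi)$ into $L|\psi|^2$, with the initial condition checked directly. If anything, you are slightly more explicit than the paper, which invokes $\partial_t \psi^* = L[\psi^*]$ without flagging that this conjugation step relies on the reality of the coefficients of $L$ — a point you correctly isolate, along with the (optional) observation that $\nabla \cdot b = 0$ ensures $\rho(\cdot,t)$ remains normalized.
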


\begin{proof}
Evaluating $\rho$ as defined above at $t=0$, we find that the initial condition is indeed satisfied:
$
    \rho(x, 0) := 
    \rho_{\psi(\cdot, 0)}(x) 
        = \rho_{\psi_0}(x) = \rho_0.
$
In order to show that $\rho = \psi^\ast\psi$ satisfies Liouville's equation for $t>0$, we recall  as a preliminary that the function $L[\cdot]$ satisfies the product rule on the relevant algebra of functions, which follows immediately from the definition \eqref{eq:liouvillian}. Proceeding to differentiate with respect to $t$ and using the fact that $\frac{\partial \psi}{\partial t} = L [\psi]$, the result follows as a consequence of elementary calculus:
\begin{align}
    \frac{\partial \rho}{\partial t} 
    & = \psi^* \frac{\partial \psi}{\partial t} + \frac{\partial \psi^\ast}{\partial t} \psi \\
    & = \psi^\ast L[\psi] + L[\psi^\ast] \psi \\
    & =L[\psi^\ast \psi] \\
    & = L[\rho].
\end{align}
\end{proof}

\section{Quantum algorithm}\label{sec:algo}

In the following section we walk through a quantum algorithm which, given a phase space distribution function $\rho_0$ and an evolution time $t \geq 0$, produces samples from the probability density function $\rho(\cdot, t)$ solving the initial value problem \eqref{eq:ivp}. KvN theory suggests the following approach:
\begin{enumerate}
    \item Prepare a system of $2n$ qumodes in a quantum state $\psi_0 \in L^2(\mathbb{R}^{2n})$ satisfying $\rho_{\psi_0} = \rho_0$.
    \item Evolve $\psi_0$ under the Hamiltonian $H_{\rm KvN}$ for evolution time $t$.
    \item Measure the resulting quantum state in the position quadrature basis.
\end{enumerate}
Note that the proof of correctness is essentially the content of Proposition \ref{prop}. Before proceeding, let us briefly discuss error bounds.  Commutator-based error bounds \cite{childs2021theory} are vacuous in our application because of the unboundedness of the KvN Hamiltonian \eqref{eq:l_to_hkvn}, which generically rules out uniform (state-independent) bounds. State-dependent error bounds for unbounded Hamiltonians are an active field of research \cite{burgarth2023strong}, the investigation of which we leave to future work.

\subsection{State preparation and measurement}

State preparation involves transforming the state of the quantum computer initially from from the vacuum configuration of $2n$ qumodes
\begin{equation}
    |0\rangle^{\otimes 2n} = \frac{1}{\pi^{n/2}}\int_{\mathbb{R}^{2n}} {\rm d}x \, e^{-\frac{1}{2}x^T x} |x\rangle
\end{equation}
to the desired state
\begin{equation}
    |\psi_0\rangle := \int_{\mathbb{R}^{2n}} {\rm d}x \, \psi_0(x) |x\rangle
\end{equation}
such that $\rho_{\psi_0}=\rho_0$. By the assumption of universality, any phase space distribution function of the form $\rho_0(x)=|\langle x | U_H(t)|0\rangle^{\otimes 2n}|^2$, for suitable $H$ discussed in section \ref{sec:CVQC}, can be efficiently approximated. This includes, as a special case, the set of Gaussian densities\footnote{See \cite{solovay-kitaev-extension} for the associated analysis of complexity.}.

After the time evolution step has been carried out, the measurement step simply involves estimating the target distribution $\rho(\cdot,t)$ from samples generated by the measurement outcomes obtained from projective measurements in the position quadrature basis. 

\subsection{Time evolution step}\label{sec:prob_form}

After the initial state has been prepared, the time evolution step involves preparing the state 
\begin{equation}
    |\psi(\cdot,t)\rangle := \int_{\mathbb{R}^{2n}} {\rm d}x\, \psi(x,t) |x\rangle
\end{equation}
from the initial state $|\psi_0\rangle$.  In this step, we pursue a simulation strategy using product formulas, which are based on identifying a decomposition of a target Hamiltonian $G$ as a linear combination of generators
\begin{equation} \label{eq:ham_lin_comb}
    G = \sum_{a=1}^L G_a
\end{equation}
such that any operator in the set $\{U_{G_a}(t) : t \in \mathbb{R}, \; a=1,\ldots,L \}$ admits an efficient implementation. Denote by $\mathcal{G}$ the subgroup of unitary operators generated by the efficiently simulable set.
A product formula is defined as a function $S:\mathbb{R} \longrightarrow \mathcal{G}$ such that $S(t)$ approximates $U_G(t)$ in the sense that $S(t/n)^n \approx U_G(t)$ for $n \geq 1$ sufficiently large. The simplest example is the first order Trotter-Suzuki product formula, which is defined as
\begin{equation}
    S_1(t) := \prod_{a=1}^L U_{G_a}(t).
\end{equation}

Product formulas are central to fault-tolerant quantum simulation algorithms, as well as variational quantum algorithms for time evolution \cite{otten2019noise, lin2021real, berthusen2022quantum, barison2021efficient}. In this work, we take an agnostic position regarding the choice of simulation algorithm, and instead focus on the identification and implementation of the efficiently simulable generating set $\{U_{G_a}(t) : t \in \mathbb{R}, \; a=1,\ldots,L \}$ for $G=H_{\rm KvN}$, from which all product formulas can be efficiently derived. 
Recalling that the canonical commutation relations are realized on $L^2(\mathbb{R}^{2n})$ by identifying $P_j = \frac{1}{i}\frac{\partial}{\partial X_j}$, the target Hamiltonian \eqref{eq:l_to_hkvn} is found to take the following form:
\begin{equation}\label{eq:kvn_ham}
    H_\text{KvN} = 
    \sum_{j=1}^n \bigg(\frac{\partial H}{\partial X_{n+j}} P_j - \frac{\partial H}{\partial X_j} P_{n+j} \bigg).
\end{equation}
In order to simplify the analysis\footnote{Relaxing this assumption would lead to a conceptually similar albeit more complicated analysis involving the approximate gate decomposition techniques discussed in Ref.~\cite{gate_decomposition2021}.}, we will ignore position-momentum cross terms in the classical Hamiltonian $H$, assuming $H$ can be written as
\begin{equation}\label{eq:no_PX_overlap}
    H = V(x_1, \dots, x_n) + T(x_{n + 1}, \dots, x_{2n}).
\end{equation}
The KvN Hamiltonian thus further simplifies to 
\begin{equation}\label{eq:kvn_ham_assumptions}
    H_\text{KvN} = 
    \sum_{j=1}^n \bigg(\frac{\partial T}{\partial X_{n+j}} P_j - \frac{\partial V}{\partial X_j} P_{n+j} \bigg).
\end{equation}

In order to draw on existing gate synthesis results \cite{gate_decomposition2020,gate_decomposition2021}, we restrict $T$ and $
V$ to be quartic polynomials, in which case $H_{\rm KvN}$ can be expressed in terms of three real-valued tensors $\alpha_k$, $\alpha_{kl}$, $\alpha_{klm}$ as follows:
\begin{align}\label{eq:kvn_ham_expanded}
\begin{split}
    H_\text{KvN} = \sum_{j=1}^n P_j \bigg( \sum_{k=n+1}^{2n} \alpha_k X_k 
    + \sum_{k, l=n+1}^{2n} \alpha_{kl} X_k X_l
    + \sum_{k,l,m=n+1}^{2n} \alpha_{klm} X_k X_l X_m \bigg) \\
    + \sum_{j=n+1}^{2n} P_j \bigg( \sum_{k=1}^n \alpha_k X_k 
    + \sum_{k,l=1}^n \alpha_{kl} X_k X_l
    + \sum_{k,l,m=1}^n \alpha_{klm} X_k X_l X_m \bigg).
\end{split}
\end{align}
We note that \eqref{eq:kvn_ham_expanded} uses the separation of variables assumption \eqref{eq:no_PX_overlap}, which implies that $[P_j,X_k]=0$ within each term of the Hamiltonian.

From the above form of $H_{\rm KvN}$, we readily identify an efficiently simulable set that consists of six kinds of time evolution operators, listed by polynomial order of the associated generator in Table \ref{tab:generators}.

\renewcommand{\arraystretch}{1.5}
\begin{table}
\begin{center}
\begin{tabular}{  c | c | c  } 
  
  \textbf{Quadratic} & \textbf{Cubic} & \textbf{Quartic} \\
  \hline
  $ \qquad\exp \big({-i s P_1 X_2} \big)\qquad$
  & $\exp \big({-i s P_1 {X_2}^2} \big)$ 
  & $\exp \big({-i s P_1 {X_2}^3} \big)$ \\ 
  
  & $\qquad \exp \big({-i s P_1 X_2 X_3} \big) \qquad$ 
  & $\exp \big({-i s P_1 {X_2}^2 X_3} \big)$ \\ 
  
  && $\qquad \exp \big({-i s P_1 X_2 X_3 X_4} \big) \qquad$ \\ 
  
\end{tabular}
\caption{In order to implement time evolution under the KvN Hamiltonian using product formulas, gate decompositions for six local unitary operators are required. These six operators are shown here, grouped by the polynomial degree of the quadrature operators in the exponent. Each operator involves between two and four qumodes, and for convenience, we write each operator here in terms of the first four qumode indices \{1, 2, 3, 4\}. These operations on different qumodes can be implemented in the same way by simply swapping the indices. Because quadrature operators on different qumodes always commute, the operations given in this table are sufficient to implement any KvN Hamiltonian satisfying the assumptions of section \ref{sec:prob_form}. \label{tab:generators}}
\end{center}
\end{table}

\subsection{Gate decomposition}

In this section, we summarize the literature leading to exact gate decompositions of the simulable set in table \ref{tab:generators}. To begin, observe that each operation has the form
\begin{equation}
    \exp\big({-i s P_1 {X_2}^{a_2} {X_3}^{a_3} {X_4}^{a_4}}\big),
\end{equation}
where the $a_j$ are nonnegative integers with $1 + a_2 + a_3 + a_4 = a$ giving the polynomial degree of the generator. Note that $a=2$ corresponds to a controlled-X gate, so it sufficies to consider $a\in\{3,4\}$. Using a Fourier transform on the first mode, we can transform the momentum quadrature into a position quadrature
\begin{equation} \label{eq:general_quartic}
    \exp\big({-i s P_1 {X_2}^{a_2} {X_3}^{a_3} {X_4}^{a_4}}\big) = 
    \mathcal{F}_1^\dagger 
    \exp\big(i s X_1 {X_2}^{a_2} {X_3}^{a_3} {X_4}^{a_4}\big)
    \mathcal{F}_1.
\end{equation}

We now apply the procedure described in section 4.3 of ~Ref. \cite{gate_decomposition2021} to expand the exponent in the right hand side of \eqref{eq:general_quartic} into a form that admits a decomposition into elementary gates. First, we use the identity
\begin{equation}\label{eq:expansion}
    X_1 {X_2}^{a_2} {X_3}^{a_3} {X_4}^{a_4} = 
    \sum_{v_2=0}^{a_2}
    \sum_{v_3=0}^{a_3}
    \sum_{v_4=0}^{a_4}
    C(v)
    \bigg(\sum_{i = 1}^4 h_i X_i\bigg)^a,
\end{equation}
where $h_1 = 1$, $h_i = a_i - 2 v_i$ for $i \in \{2, 3, 4\}$, and $C(v)$ is defined as 
\begin{equation}
    C(v) = \bigg( \frac{1}{2^{a - 1} a!} (-1)^{\sum_{i = 2}^4 v_i} \bigg) 
\binom{a_2}{v_2} \binom{a_3}{v_3} \binom{a_4}{v_4} .
\end{equation}
All terms in the sum in \eqref{eq:expansion} commute, so the exponential in \eqref{eq:general_quartic} can be written as
\begin{equation}\label{eq:exp_decomp}
    \exp\big({i s X_1 {X_2}^{a_2} {X_3}^{a_3} {X_4}^{a_4}}\big) = 
    \prod_{v_2 = 0}^{a_2} 
    \prod_{v_3 = 0}^{a_3} 
    \prod_{v_4 = 0}^{a_4}
    \exp\Bigg(i s C(v) \bigg(\sum_{i = 1}^4 h_i X_i\bigg)^a\Bigg).
\end{equation}
Implementing \eqref{eq:general_quartic} in terms of elementary gates thus amounts to repeated application of operators of the form appearing in the above iterated product \eqref{eq:exp_decomp}.
~Ref. \cite{gate_decomposition2021} has shown that the multiplicand in \eqref{eq:exp_decomp} can be decomposed as 
\begin{equation}\label{eq:key_decomp} 
    \exp\Bigg(i s C(v) \bigg(\sum_{i = 1}^4 h_i X_i\bigg)^a\Bigg)
    = U^\dagger
    \exp\big({i s C(v) {X_1}^a}\big)
    U,
\end{equation}
where 
\begin{equation}
    U := \exp \big( {-ih_2 X_2 P_1} \big) 
    \exp \big( {-ih_3 X_3 P_1} \big) 
    \exp \big( {-ih_4 X_4 P_1} \big).
\end{equation}

The detailed proof of \eqref{eq:key_decomp} is reviewed in appendix \ref{appendix:decomp_proof}. Notice that $U$
can be implemented using the controlled-X gates defined in \eqref{eq:controlled-x}. We now discuss the implementation of $\exp\big({i s C(v) {X_1}^a}\big)$. 
In the specific case of $a = 3$, this amounts to the cubic phase gate defined in \eqref{eq:cubic_phase}. In the case of $a = 4$, a decomposition for $\exp\big(i s C(v) {X_1}^4\big)$ has been provided in ~Ref. \cite[Eqs.~(56-58)]{gate_decomposition2021}.

In the specific case of $a = 3$, more efficient decompositions exist. Specifically, a decomposition for the first cubic entry in table \ref{tab:generators} is provided in ~Ref. \cite[Eq.~(51)]{gate_decomposition2021}. A gate decomposition for $e^{i s X_1 X_2 X_3}$ is provided in ~Ref. \cite[Eq.~(53)]{gate_decomposition2021}. This decomposition can be used for the second cubic entry in table \ref{tab:generators} after taking a Fourier transform on the mode corresponding to the momentum quadrature
\begin{equation}
    \exp \Big( {-i s P_1 X_2 X_3} \Big) = \mathcal{F}_1^\dagger \exp \Big(i s X_1 X_2 X_3 \Big) \mathcal{F}_1.
\end{equation}

\section{Harmonic oscillator example}\label{sec:example}

It is instructive to consider the example of the simple harmonic oscillator, for which the classical Hamiltonian is given by
\begin{equation}\label{eq:ham}
    H = \frac{1}{2m}  {x_2}^2 + \frac{1}{2} m \omega^2  {x_1}^2,
\end{equation}
where $x_1$ and $x_2$ denote the position and  momentum, respectively.  The Liouvillian and the associated KvN Hamiltonian for this system are thus given by 
\begin{align}
    L & = m \omega^2 x_1 \frac{\partial}{\partial  x_2} - \frac{1}{m} x_2 \frac{\partial}{\partial x_1}, \\
    H_\text{KvN} & = \frac{1}{m} X_2 P_1 - m \omega^2 X_1 P_2.
\end{align}
It follows by inspection of $H_{\rm KvN}$ that the only gate required for product formula simulation is the controlled-X \eqref{eq:controlled-x}.

\section{Outlook}\label{sec:outlook}

The KvN formalism potentially offers the opportunity for quantum acceleration of classical dynamics. Previous work \cite{kvn2} examined KvN dynamics through the lens of digital quantum computation and identified obstacles to achieving reliable solutions due to necessary finitization of the continuous dynamics. 

It is plausible that this shortcoming can be entirely circumvented in the CVQC framework. The use of an inherently continuous computing platform to simulate continuous systems avoids the issue of discretization errors, which could potentially enable simulation over longer time horizons compared to DQS. In this paper, we provided a blueprint for mapping KvN systems onto a continuous-variable quantum computer, using the example of vibrational dynamics as motivation. 

A number of directions for future work suggest themselves. The first concerns extending gate synthesis techniques to allow for higher-order polynomial interactions. The method provided by Ref.~\cite{gate_decomposition2021} only allows for exact decomposition of unitaries whose associated generator has a polynomial order divisble by either two or three. In order to implement unitaries whose generator has, for example, a polynomial degree of five or seven, new decompositions would need to be derived. Second, approximate decomposition methods \cite{gate_decomposition2021} could be used to implement unitaries that involve phase space cross-terms, allowing for simulation of Hamiltonian systems that do not admit the separation of variables \eqref{eq:no_PX_overlap}. Finally, it would be interesting to investigate state-dependent error bounds for time evolution generated by unbounded Hamiltonians of KvN form.

\section{Acknowledgements}
We acknowledge support from
the Automotive Research Center at the University of Michigan (UM) in accordance with Cooperative Agreement W56HZV-19-2-0001 with U.S. Army DEVCOM Ground Vehicle Systems Center.

\printbibliography

@article{burgarth2023strong,
  title={Strong Error Bounds for Trotter \& Strang-Splittings and Their Implications for Quantum Chemistry},
  author={Burgarth, Daniel and Facchi, Paolo and Hahn, Alexander and Johnsson, Mattias and Yuasa, Kazuya},
  journal={arXiv preprint arXiv:2312.08044},
  year={2023}
}

@article{childs2021theory,
  title={Theory of trotter error with commutator scaling},
  author={Childs, Andrew M and Su, Yuan and Tran, Minh C and Wiebe, Nathan and Zhu, Shuchen},
  journal={Physical Review X},
  volume={11},
  number={1},
  pages={011020},
  year={2021},
  publisher={APS}
}

@article{lin2021real,
  title={Real-and imaginary-time evolution with compressed quantum circuits},
  author={Lin, Sheng-Hsuan and Dilip, Rohit and Green, Andrew G and Smith, Adam and Pollmann, Frank},
  journal={PRX Quantum},
  volume={2},
  number={1},
  pages={010342},
  year={2021},
  publisher={APS}
}

@article{otten2019noise,
  title={Noise-resilient quantum dynamics using symmetry-preserving ansatzes},
  author={Otten, Matthew and Cortes, Cristian L and Gray, Stephen K},
  journal={arXiv preprint arXiv:1910.06284},
  year={2019}
}

@article{cerezo2022challenges,
  title={Challenges and opportunities in quantum machine learning},
  author={Cerezo, Marco and Verdon, Guillaume and Huang, Hsin-Yuan and Cincio, Lukasz and Coles, Patrick J},
  journal={Nature Computational Science},
  volume={2},
  number={9},
  pages={567--576},
  year={2022},
  publisher={Nature Publishing Group US New York}
}

@article{orus2019tensor,
  title={Tensor networks for complex quantum systems},
  author={Or{\'u}s, Rom{\'a}n},
  journal={Nature Reviews Physics},
  volume={1},
  number={9},
  pages={538--550},
  year={2019},
  publisher={Nature Publishing Group UK London}
}

@article{lloyd1999quantum,
  title={Quantum computation over continuous variables},
  author={Lloyd, Seth and Braunstein, Samuel L},
  journal={Physical Review Letters},
  volume={82},
  number={8},
  pages={1784},
  year={1999},
  publisher={APS}
}

@inproceedings{de2019scalable,
  title={Scalable solvers for cone complementarity problems in frictional multibody dynamics},
  author={De, Saibal and Corona, Eduardo and Jayakumar, Paramsothy and Veerapaneni, Shravan},
  booktitle={2019 IEEE High Performance Extreme Computing Conference (HPEC)},
  pages={1--7},
  year={2019},
  organization={IEEE}
}

@article{barison2021efficient,
  title={An efficient quantum algorithm for the time evolution of parameterized circuits},
  author={Barison, Stefano and Vicentini, Filippo and Carleo, Giuseppe},
  journal={Quantum},
  volume={5},
  pages={512},
  year={2021},
  publisher={Verein zur F{\"o}rderung des Open Access Publizierens in den Quantenwissenschaften}
}

@article{berthusen2022quantum,
  title={Quantum dynamics simulations beyond the coherence time on noisy intermediate-scale quantum hardware by variational Trotter compression},
  author={Berthusen, Noah F and Trevisan, Tha{\'\i}s V and Iadecola, Thomas and Orth, Peter P},
  journal={Physical Review Research},
  volume={4},
  number={2},
  pages={023097},
  year={2022},
  publisher={APS}
}

@phdthesis{gate_decomposition2020,
  author="Kalajdzievski, Timjan",
  title="Exact Gate Decompositions For Photonic Quantum Computers",
  school="York University",
  year="2019"
}

@article{gate_decomposition2021,
   title={Exact and approximate continuous-variable gate decompositions},
   volume={5},
   ISSN={2521-327X},
   journal={Quantum},
   publisher={Verein zur Forderung des Open Access Publizierens in den Quantenwissenschaften},
   author={Kalajdzievski, Timjan and Quesada, Nicolás},
   year={2021},
   month={2},
   pages={394}
}

@article{kvn1,
   title={Koopman–von Neumann approach to quantum simulation of nonlinear classical dynamics},
   volume={2},
   ISSN={2643-1564},
   number={4},
   journal={Physical Review Research},
   publisher={American Physical Society (APS)},
   author={Joseph, Ilon},
   year={2020},
   month=oct }

@misc{kvn2,
      title={Koopman von Neumann mechanics and the Koopman representation: A perspective on solving nonlinear dynamical systems with quantum computers}, 
      author={Yen Ting Lin and Robert B. Lowrie and Denis Aslangil and Yiğit Subaşı and Andrew T. Sornborger},
      year={2022},
      eprint={2202.02188},
      archivePrefix={arXiv},
      primaryClass={quant-ph},
 %     url={https://arxiv.org/abs/2202.02188}, 
}

@INPROCEEDINGS{kvn3,
  author={Barthe, Alice and Grossi, Michele and Tura, Jordi and Dunjko, Vedran},
  booktitle={2023 IEEE International Conference on Quantum Computing and Engineering (QCE)}, 
  title={Continuous Variables Quantum Algorithm for Solving Ordinary Differential Equations}, 
  year={2023},
  volume={02},
  number={},
  pages={48-53},
}

@article{solovay-kitaev-extension,
  title = {Energy-Constrained Discrimination of Unitaries, Quantum Speed Limits, and a Gaussian Solovay-Kitaev Theorem},
  author = {Becker, Simon and Datta, Nilanjana and Lami, Ludovico and Rouz\'e, Cambyse},
  journal = {Phys. Rev. Lett.},
  volume = {126},
  issue = {19},
  pages = {190504},
  numpages = {7},
  year = {2021},
  month = {5},
  publisher = {American Physical Society},
 % url = {https://link.aps.org/doi/10.1103/PhysRevLett.126.190504}
}

@ARTICLE{Koopman1931,
  title    = "Hamiltonian Systems and Transformation in Hilbert Space",
  author   = "Koopman, B O",
  journal  = "Proc Natl Acad Sci U S A",
  volume   =  17,
  number   =  5,
  pages    = "315--318",
  month    =  may,
  year     =  1931,
  address  = "United States"
}

@article{JvN1,
 ISSN = {0003486X, 19398980},
 author = {J. von Neumann},
 journal = {Annals of Mathematics},
 number = {3},
 pages = {587--642},
 publisher = {[Annals of Mathematics, Trustees of Princeton University on Behalf of the Annals of Mathematics, Mathematics Department, Princeton University]},
 title = {Zur Operatorenmethode In Der Klassischen Mechanik},
 volume = {33},
 year = {1932}
}

@article{JvN2,
 ISSN = {0003486X, 19398980},
 author = {J. von Neumann},
 journal = {Annals of Mathematics},
 number = {4},
 pages = {789--791},
 publisher = {[Annals of Mathematics, Trustees of Princeton University on Behalf of the Annals of Mathematics, Mathematics Department, Princeton University]},
 title = {Zusätze Zur Arbeit,,Zur Operatorenmethode...},
  volume = {33},
 year = {1932}
}

@Article{Daley2022,
author={Daley, Andrew J.
and Bloch, Immanuel
and Kokail, Christian
and Flannigan, Stuart
and Pearson, Natalie
and Troyer, Matthias
and Zoller, Peter},
title={Practical quantum advantage in quantum simulation},
journal={Nature},
year={2022},
month={Jul},
day={01},
volume={607},
number={7920},
pages={667-676},
issn={1476-4687},
doi={10.1038/s41586-022-04940-6},
}

\appendix

\section{Proof of decomposition}\label{appendix:decomp_proof}

We now outline a proof for \eqref{eq:key_decomp}. It suffices to show that 
\begin{equation}\label{eq:wts}
    \exp\big(i s (X_1 + X_2)^a\big) = \exp(i P_1 X_2) \exp(i s {X_1}^a) \exp(-i P_1 X_2),
\end{equation}
as \eqref{eq:key_decomp} can then be obtained by repeated application of this identity. Although this result is well known, we provide the full details here for the convenience of the reader. Recall the well-known identities
\begin{align}\label{eq:unitary_conjugation}
    U^\dagger e^{i s H} U
    & = e^{i s U^\dagger H U} \\
    e^{A}Be^{-A} &  = B + [A, B] + \frac{1}{2!} [A, [A, B]] + \frac{1}{3!} [A, [A, [A, B]]] + \cdots \label{eq:bch_lemma}
\end{align}
as well as the following commutator properties:
\begin{align}
    [A, BC] &=[A, B]C+B[A, C] \\
    [AB, C] & = A[B, C] + [A, C]B \\
    [A, B^a] & = \sum_{j = 0}^{a - 1} B^j [A, B] B^{a - j - 1}.
\end{align}
Applying the identity \eqref{eq:unitary_conjugation} to the right hand side of \eqref{eq:wts}, we have that
\begin{align}
    e^{i P_1 X_2} e^{i s {X_1}^a} e^{-i P_1 X_2} = 
    \exp(i s e^{i P_1 X_2} {X_1}^a e^{-i P_1 X_2}).
\end{align}
The terms in the exponent can then be expanded using \eqref{eq:bch_lemma} to arrive at
\begin{equation}
    i s e^{i P_1 X_2} {X_1}^a e^{-i P_1 X_2} = i s \Big( {X_1}^a 
    + [i P_1 X_2, {X_1}^a] 
    + \frac{1}{2!}[i P_1 X_2, [i P_1 X_2, {X_1}^a]] + \dotsc \Big). 
\end{equation}
Next, the the above commutator properties can be used to rewrite each commutator term in the sum. The commutator in the second term becomes
\begin{align}
    [i P_1 X_2, {X_1}^a] 
    &= [i P_1, {X_1}^a]X_2 \\
    &= i\sum_{j = 0}^{a - 1} {X_1}^j [P_1, X_1] {X_1}^{a - j - 1} X_2 \\
    &= -i^2\sum_{j = 0}^{a - 1} {X_1}^{a - 1} X_2 \\
    &= a {X_1}^{a - 1} X_2.
\end{align}
Similarly, the commutator in the third term can be rewritten as 
\begin{align}
    [i P_1 X_2, [i P_1 X_2, {X_1}^a]] 
    &= [i P_1 X_2, a {X_1}^{a - 1} X_2] \\
    &= [i P_1, a {X_1}^{a - 1} X_2] X_2 \\
    &= [i P_1, a {X_1}^{a - 1}]{X_2}^2 \\
    &= i a\sum_{j = 0}^{a - 2} {X_1}^j [P_1, X_1] {X_1}^{a - j - 2} {X_2}^2 \\ 
    &= -i^2 a \sum_{j = 0}^{a - 2} {X_1}^{a - 2} {X_2}^2 \\
    &= a (a - 1) {X_1}^{a - 2} {X_2}^2.
\end{align}
The same approach shows that the commutator in the fourth term becomes
\begin{equation}
    [i P_1 X_2, [i P_1 X_2, [i P_1 X_2, {X_1}^a]]] = a (a - 1) (a - 2) {X_1}^{a - 3} {X_2}^3,
\end{equation}
and indeed the pattern continues all the way up to the $(a + 1)$th commutator term, which becomes 
\begin{equation}
    \underbrace{[i P_1 X_2, \dotsc  [i P_1 X_2, [i P_1 X_2}_{a \text{ times}}, {X_1}^a]] \dotsc ]
    = a! {X_2}^a.
\end{equation}
All commutators in subsequent terms are zero. We therefore have that 
\begin{align}
    &i s \Big( {X_1}^a 
    + [i P_1 X_2, {X_1}^a] 
    + \frac{1}{2!}[i P_1 X_2, [i P_1 X_2, {X_1}^a]] 
    + \dotsc \Big) \\
    &= is \Big( {X_1}^a + a {X_1}^{a - 1} X_2
    + \frac{a (a - 1)}{2!} {X_1}^{a - 2} {X_2}^2 \\
    &\quad+ \frac{a (a - 1) (a - 2)}{3!}{X_1}^{a - 3} {X_2}^3
    + \dotsc  + \frac{a!}{a!}{X_2}^a \Big) \\
    &= is(X_1 + X_2)^a,
\end{align}
and thus it is shown that 
\begin{equation}
    e^{i P_1 X_2} e^{i s {X_1}^a} e^{-i P_1 X_2} = 
    e^{i s (X_1 + X_2)^a}.
\end{equation}

\end{document}